\newcommand{\ket}[1]{\mbox{$ | #1 \rangle $}}
\newcommand{\bra}[1]{\mbox{$ \langle #1 | $}}
\newcommand{\braket}[2]{\mbox{$ \langle #1 | #2 \rangle $}}
\newcommand{\cH}{{\cal H}}
\newcommand{\tr}{\mathrm{tr}}
\newtheorem{proposition}{Proposition}
\newtheorem{conjecture}{Conjecture}
\newcommand{\mean}[1]{\langle #1 \rangle}
\newcommand{\Trnorm}[1]{{\parallel\! #1 \!\parallel_{\rm tr}}}
\newcommand{\one}{\openone}
\newcommand{\Tr}[1]{\rm {Tr}}
\renewcommand{\vec}[1]{\mathbf{#1}}
\newcommand{\beq}{\begin{equation}}
\newcommand{\eeq}{\end{equation}}
\begin{document}
\title{Enhanced entanglement criterion via symmetric informationally complete measurements}
\author{Jiangwei Shang}
\email{jiangwei.shang@bit.edu.cn}
\affiliation{Beijing Key Laboratory of Nanophotonics and Ultrafine Optoelectronic Systems, School of Physics, Beijing Institute of Technology, Beijing 100081, China}
\affiliation{Naturwissenschaftlich-Technische Fakult{\"a}t, Universit{\"a}t Siegen, Walter-Flex-Stra{\ss}e 3, 57068 Siegen, Germany}
\author{Ali Asadian}
\email{ali.asadian668@gmail.com}
\affiliation{Naturwissenschaftlich-Technische Fakult{\"a}t, Universit{\"a}t Siegen, Walter-Flex-Stra{\ss}e 3, 57068 Siegen, Germany}
\affiliation{Department of Physics, Institute for Advanced Studies in Basic Sciences (IASBS), Gava Zang, Zanjan 45137-66731, Iran}
\affiliation{Vienna Center for Quantum Science and Technology, Atominstitut, TU Wien, 1040 Vienna, Austria}
\author{Huangjun Zhu}
\email{zhuhuangjun@fudan.edu.cn}
\affiliation{Department of Physics and Center for Field Theory and Particle Physics, Fudan University, Shanghai 200433, China}
\affiliation{State Key Laboratory of Surface Physics, Fudan University, Shanghai 200433, China}
\affiliation{Institute for Nanoelectronic Devices and Quantum Computing, Fudan University, Shanghai 200433, China}
\affiliation{Collaborative Innovation Center of Advanced Microstructures, Nanjing 210093, China}
\affiliation{Institute for Theoretical Physics, University of Cologne,  Cologne 50937, Germany}
\author{Otfried G\"{u}hne}
\email{otfried.guehne@uni-siegen.de}
\affiliation{Naturwissenschaftlich-Technische Fakult{\"a}t, Universit{\"a}t Siegen, Walter-Flex-Stra{\ss}e 3, 57068 Siegen, Germany}

\date{\today}
%

\begin{abstract}
We show that a special type of measurements, called  symmetric informationally complete positive operator-valued measures (SIC POVMs), provide a stronger entanglement detection criterion than the computable cross-norm or realignment criterion based on local orthogonal observables. As an illustration, we demonstrate the enhanced entanglement detection power in simple systems of qubit and qutrit pairs.
This observation highlights the significance  of SIC POVMs for entanglement detection.
\end{abstract}

\pacs{03.65.Ta, 03.65.Ud, 03.67.Mn}

\maketitle

\section{Introduction}
Entanglement \cite{ent1,ent2} is one of the most distinctive features of quantum theory as compared to classical theory, which is also considered to be a useful resource for tasks like quantum communication, quantum cryptography, and quantum metrology.
Thus, developing simple and efficient criteria for the detection of entanglement in quantum states is indeed pivotal.
However, it has long been proven that entanglement detection is an NP-hard problem as the system size increases \cite{Gurvits2003, Gharibian2010}.

A pure state $\ket{\psi}$ of two particles is called separable, if it is a product
state $\ket{\psi}={\ket{u}\otimes \ket{v}}$, otherwise it is entangled. More generally, a bipartite
mixed state is separable if it can be written as a convex combination
of pure product states,
\begin{equation}\label{eq:convexsum}
\rho = \sum_i p_i \ket{u_i}\bra{u_i} \otimes \ket{v_i}\bra{v_i}\,,
\end{equation}
where the $p_i$s form a probability distribution, so they are positive and sum
up to one. A state that cannot be written in the above form is called entangled.

Many criteria on entanglement detection have been developed. Most of them, however, provide only sufficient conditions for the detection. A well-known example is the positive partial transpose (PPT) criterion \cite{ppt, ppt2}, which is necessary and sufficient for qubit-qubit and qubit-qutrit systems only.
For higher dimensions, there exist the so-called bound entangled states which are PPT and nondistillable.
It is still an open question whether the PPT criterion completely characterizes bound entanglement, namely, whether all bound entangled states are PPT.

Another popular criterion is the computable cross-norm or realignment criterion \cite{ccnr1, ccnr2, ccnr3}, often acronymed as CCNR.
Notably, the CCNR criterion is able to detect the entanglement of many states where the PPT criterion fails.
On the other hand, there also exist some states which are detected by the PPT criterion, but cannot be detected by CCNR \cite{pra78.052319}.
Therefore, we should not assess one as either stronger or weaker than the other criterion, but rather complementing each other.
There are also nonlinear extensions of the CCNR criterion using, for instance, the local uncertainty relations or covariance matrices \cite{prl99.130504, pra78.052319, pra81.032333, pra82.032306}, and other feasible methods \cite{pra77.060301}.
Nevertheless, here we will not consider such extensions.

Empirically, the CCNR criterion can be evaluated by measuring the correlations between local orthogonal observables of two parties. Here, we propose an analogous yet more efficient entanglement detection criterion. Instead of using a set of local orthogonal observables, we use a single generalized measurement,  known as  the symmetric informationally complete positive operator-valued measure (SIC POVM), for each party. Our criterion is determined by the correlations between POVM elements of the two SIC POVMs for the two parties
and shares the simplicity of the CCNR criterion. In addition, it has at least two notable advantages, which are connected to the properties of SIC POVMs. First, this criterion can detect many entangled states (including bound entangled states) that cannot be detected by the CCNR criterion. Second, all correlations featured in the criterion can be measured in one go instead of one by one as in the CCNR criterion.  These advantages are expected to have both theoretical and experimental interests.

Incidentally, Refs.~\cite{qip14.2281, qip15.5119} used a different approach, namely, the sum of the correlation entries, to derive  separability criteria with SIC POVMs.
Very recently, Bae \emph{et al.} \cite{arXiv:1803.02708} studied entanglement detection via quantum 2-designs, which include  SIC POVMs as a special example; however, their entanglement criteria are not so closely related to the properties of 2-designs except for the tomographic completeness.  Ref.~\cite{arXiv:1801.07927} investigated the entanglement properties of multipartite systems with tight informationally complete measurements, including SIC POVMs.
In addition,  Ref.~\cite{qip17.111}  considered a nonlinear entanglement criterion based on SIC POVMs, which turns out to be  equivalent to the criteria using observables in Refs.~\cite{pra78.052319, pra77.060301}.

This paper is organized as follows. We will first recall the CCNR criterion in Sec.~\ref{sec:CCNR} and the basic properties of  SIC POVMs in Sec.~\ref{sec:SIC}. In Sec.~\ref{sec:EntSIC}, we derive the entanglement criterion based on SIC POVMs. Section~\ref{sec:Ex} demonstrates the superiority of the criterion with various examples, then we close with a few remarks.

\section{The CCNR criterion}\label{sec:CCNR}
The CCNR criterion can be formulated in different forms.
One approach makes use of the Schmidt decomposition of the quantum state in operator space. According to the Schmidt theorem, any bipartite density matrix $\rho$ on ${\cH=\cH^{A}\otimes\cH^{B}}$ with dimension $d=d_A\times d_B$ (assuming $d_A\leq d_B$) can be written as
\begin{equation}\label{eq:schmidtRho}
\rho=\sum_{k=1}^{d_A^2}\lambda_k G_k^{A}\otimes G_k^{B}\,,\quad \lambda_k=\mean{G_k^{A}\otimes G_k^{B}}.
\end{equation}
Here ${\lambda_k\ge0}$ are the Schmidt coefficients,
$\{G_k^{A}\}$ is an orthonormal basis
of Hermitian operators on $\cH^A$, and $\{G_k^{B}\}$ is a set of  orthonormal
Hermitian operators on $\cH^B$, that is, $\tr(G_k^AG_l^A)=\tr(G_k^BG_l^B)=\delta_{kl}$. In terms of the Schmidt decomposition, the CCNR criterion can be stated as  follows.
\begin{proposition}[CCNR]\label{ccnr}
If a state $\rho$ is separable and has Schmidt decomposition as in Eq.~\eqref{eq:schmidtRho}, then
\begin{equation}
\sum_k\lambda_k\le 1
\end{equation}
has to hold; otherwise, it is entangled.
\end{proposition}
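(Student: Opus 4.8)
The plan is to exploit that $\sum_k\lambda_k$ is a convex function of the state $\rho$, which reduces the claim to evaluating this quantity on pure product states.

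First I would identify $\sum_k\lambda_k$ with a matrix norm. Fixing orthonormal Hermitian operator bases $\{E_\mu^A\}$ and $\{E_\nu^B\}$, one writes $\rho=\sum_{\mu\nu}C_{\mu\nu}\,E_\mu^A\otimes E_\nu^B$, where the coefficient matrix $C$ depends linearly on $\rho$ and is real because $\rho$ is Hermitian. The operator Schmidt decomposition of Eq.~\eqref{eq:schmidtRho} is nothing but the singular value decomposition of $C$: writing $C=U\Sigma V^{T}$ and absorbing the orthogonal matrices $U,V$ into new real, hence Hermitian and orthonormal, bases $G_k^A=\sum_\mu U_{\mu k}E_\mu^A$ and $G_k^B=\sum_\nu V_{\nu k}E_\nu^B$ brings $\rho$ into the stated form, so the Schmidt coefficients $\lambda_k$ are exactly the singular values of $C$ and $\sum_k\lambda_k=\Trnorm{C}$. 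Since the trace norm is invariant under the unitary freedom in the choice of bases, this value is well defined; being the composition of the linear map $\rho\mapsto C$ with a norm, the quantity $N(\rho):=\sum_k\lambda_k$ is a seminorm, and in particular convex.

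Next I would compute $N$ on an arbitrary pure product state $\rho=\proj{u}\otimes\proj{v}$. Its operator Schmidt rank is one: normalizing each factor in the Hilbert--Schmidt norm yields a single Schmidt coefficient $\lambda_1=\norm{\proj{u}}\,\norm{\proj{v}}$, and since $\tr[(\proj{u})^2]=\braket{u}{u}^2=1$ one has $\norm{\proj{u}}=\norm{\proj{v}}=1$, so that $N(\rho)=1$. Finally, for a generic separable state written as in Eq.~\eqref{eq:convexsum}, convexity gives $N(\rho)\le\sum_i p_i\,N(\proj{u_i}\otimes\proj{v_i})=\sum_i p_i=1$, which is precisely the asserted bound; a state violating it must therefore be entangled.

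I expect the first step to be the main obstacle, namely rigorously establishing that $\sum_k\lambda_k$ is genuinely a norm. This boils down to the basis-independence of $\Trnorm{C}$ together with its convexity in $\rho$; once these are secured, the evaluation on product states and the convexity argument are entirely routine.
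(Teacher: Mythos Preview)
Your proposal is correct and follows essentially the same route as the paper. Although the formal proof of Proposition~\ref{ccnr} is deferred to the literature, the paper spells out the identical argument immediately after Eq.~\eqref{eq:Ctr}: identify $\sum_k\lambda_k$ with the trace norm $\Trnorm{C}$ of the correlation matrix, evaluate it on product states (where $C$ is rank one with $\Trnorm{C}=\sqrt{\tr(\rho_A^2)\tr(\rho_B^2)}\le 1$, cf.\ Eq.~\eqref{eq:Cnorm}), and then invoke convexity of the trace norm under mixing---the only cosmetic difference being that you restrict directly to \emph{pure} product states, where the bound is saturated.
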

\begin{proof}
See, for instance, the proof in Refs.~\cite{ccnr1,ccnr2}.
\end{proof}
Incidentally, the CCNR criterion is closely related to the linear entanglement witness $W=\openone-\sum_kG_k^{A}\otimes G_k^{B}$; see Ref.~\cite{pra69.022312} for another form.

Besides the Schmidt form, the state $\rho$ can also be decomposed with arbitrary local orthonormal operator bases, i.e., $\rho=\sum_{ij}c_{ij} A_i\otimes B_j$, where $\{A_i\}$ ($\{B_j\}$) forms an orthonormal basis for the space of linear operators  on $\cH^A$ ($\cH^B$). It turns out the trace norm of the correlation matrix $[C]_{ij}=\langle A_i\otimes B_j\rangle=c_{ij}$ is equal to the sum of the Schmidt coefficients,
\beq
\label{eq:Ctr}
\Trnorm{C}=\sum_k\lambda_k\,,
\eeq
where $\Trnorm{C}\equiv\tr(\sqrt{CC^\dag})$ denotes the trace norm. This equality follows from the fact that the Schmidt coefficients $\lambda_k$ happen to be the singular values of $C$.

In terms of the correlation matrix, the CCNR criterion can be formulated as follows: the trace norm of the correlation matrix of any separable state is upper bounded by one, that is,
\begin{equation}
\Trnorm{C}\leq 1.
\end{equation}
This conclusion can be proved directly as follows. For any given  product state $\rho=\rho_A\otimes\rho_B$, the matrix elements of the correlation matrix have the form $[C]_{ij}=\langle A_i\rangle \langle B_j\rangle= a_ib_j$, where $a_i=\langle A_i\rangle$ and $b_j=\langle B_j\rangle$. So the correlation matrix can be expressed as an outer product
 $C\equiv|\vec a)(\vec b|$, where  the curly ket $|\vec a)\equiv (a_1\,\, a_2\dots)^T$ denotes the column vector composed of $a_i$s, while the curly bra $(\vec b|\equiv (b_1\,\, b_2\dots)$ denotes the row vector composed of $b_i$s. In addition, we have
\begin{align}
\label{eq:Cnorm}
\Trnorm{C}&=(\vec a|\vec a)^{1/2}(\vec b|\vec b)^{1/2} =\sqrt{\sum_i a^2_i \sum_j b^2_j}\nonumber  \\
&=\sqrt{\tr(\rho_A^2) \tr(\rho_B^2)}\leq 1.
\end{align}
By employing the convexity property of the trace norm under mixing, we conclude that $\Trnorm{C}\leq1$ for any separable state.

\section{SIC POVM}\label{sec:SIC}
In a $d$-dimensional Hilbert space, a SIC POVM  $\mathcal M_{\rm s}=\{\Pi_k\}_{k=1}^{d^2}$ is composed of $d^2$ outcomes $\Pi_k=\ket{\psi_k}\bra{\psi_k}/d$ which are subnormalized rank-1 projectors onto pure states, with equal pairwise fidelity, that is,
\begin{equation}\label{eq:SIC}
|\braket{\psi_i}{\psi_j}|^2=\frac{d \delta_{ij}+1}{d+1}\,,\quad i,j=1,2,\dots,d^2\,.
\end{equation}
It is not difficult to verify the completeness condition, that is,  ${\sum_{k=1}^{d^2}\Pi_k=\one}$.
It has been conjectured that SIC POVMs exist in all finite dimensions \cite{Zauner}, although a general proof is still missing.
So far analytical solutions have been found in dimensions $d=2-24, 28, 30, 31, 35, 37, 39, 43, 48, 124$, and numerical solutions with high precision have been found up to dimension $d=151$; see  Ref.~\cite{SICreview} for a recent review.
Experimentally, SIC POVMs in low dimensions have been realized in various quantum systems \cite{pra74.022309, pra83.051801, prx5.041006}.

Given a SIC POVM $\mathcal M_{\rm s}=\{\Pi_k\}_{k=1}^{d^2}$ and a quantum state $\rho$, the probability of obtaining outcome $k$ is given by the Born rule,
${p_k=\langle\Pi_k\rangle=\tr(\rho\Pi_k)}$. Conversely,  the quantum state $\rho$ can be reconstructed from these probabilities as follows,
\begin{eqnarray}\label{eq:rhoSIC}
\rho&=&\sum_{k=1}^{d^2}\bigl[d(d+1)p_k-1\bigr]\Pi_k\nonumber\\
&=&d(d+1)\sum_{k=1}^{d^2}p_k\Pi_k-\openone\,,
\end{eqnarray}
see Fig.~\ref{fig:SIC-sketch}(a). Calculation shows that
\begin{equation}\label{eq:pureSIC1}
\sum_{k=1}^{d^2}p_k^2=\frac{1+\tr(\rho^2)}{d(d+1)}\leq \frac{2}{d(d+1)}\,,
\end{equation}
where the upper bound is saturated iff $\rho$ is pure.

\begin{figure}[t]
\includegraphics[width=.82\columnwidth]{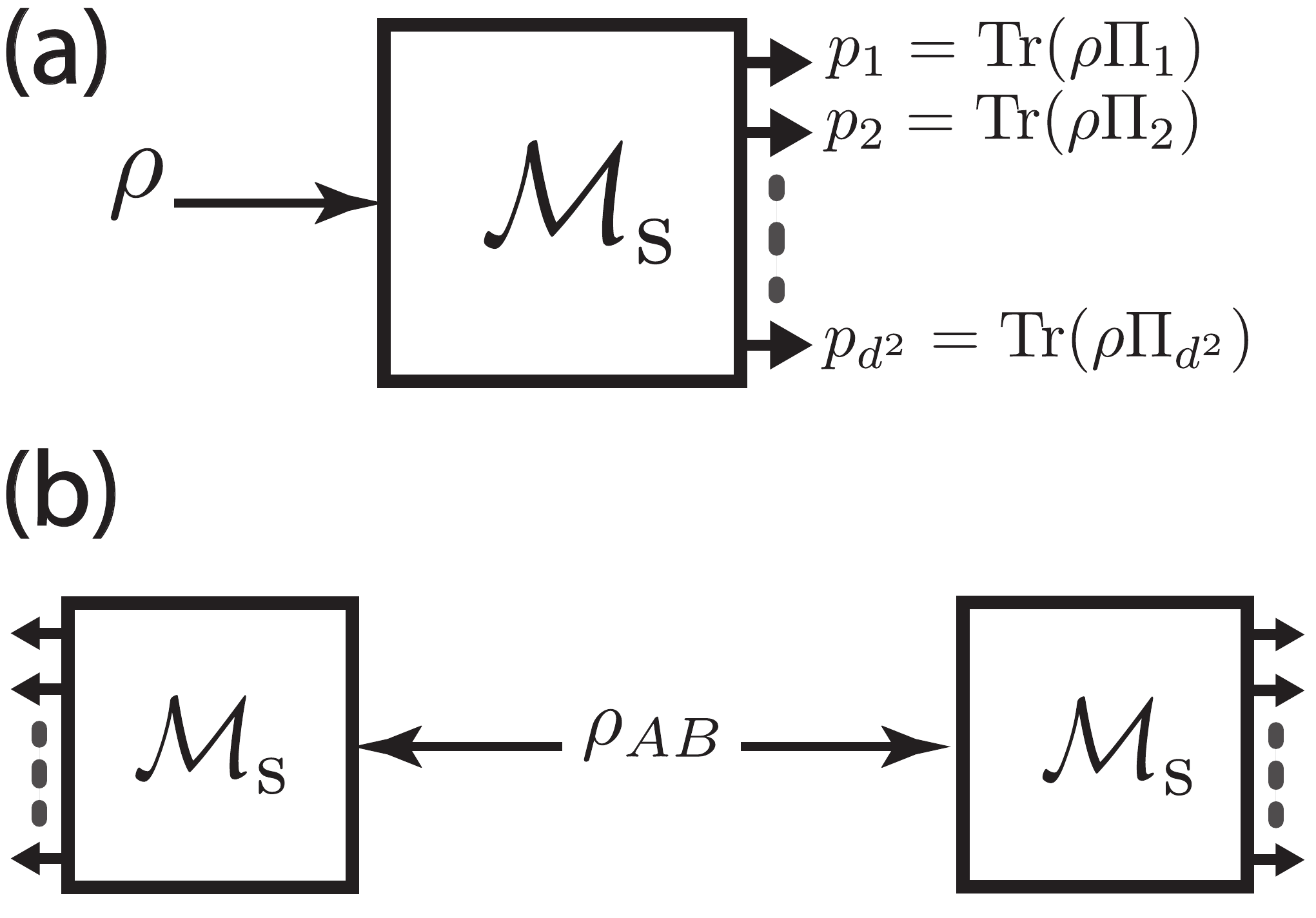}
\caption{\label{fig:SIC-sketch}
(a) Tomographically complete single measurement setting $\mathcal M_{\rm s}$ with $d^2$ outcomes.
(b) Employing this type of measurement setting for each party of a bipartite system, we have a new paradigm for entanglement detection. Unlike that for CCNR, only a single measurement setting for each party is involved, which may help to reduce the experimental complexity.
}
\end{figure}

For the convenience of later applications, we need to renormalize the elements in the SIC POVM,
\begin{equation}
E_k\equiv\sqrt{\frac{d(d+1)}{2}}\,\Pi_k=\sqrt{\frac{d+1}{2d}}\ket{\psi_k}\bra{\psi_k}\,.
\end{equation}
Let
\begin{equation}\label{eq:ek}
e_k=\tr(\rho E_k)=\sqrt{\frac{d(d+1)}{2}}\,p_k\,;
\end{equation}
then the constraint  in Eq.~\eqref{eq:pureSIC1} can be formulated as follows,
\begin{equation}\label{eq:pureSIC}
(\vec e|\vec e)=\sum_{k=1}^{d^2}e_k^2=\frac{1+\tr(\rho^2)}{2} \leq 1\,.
\end{equation}
Let $\{G_k\}$ be an arbitrary Hermitian operator basis and $a_k=\tr(\rho G_k)$. Then $(\vec a|\vec a)=\tr(\rho^2)$, so we have
\beq\label{eq:ea}
(\vec e|\vec e)=\frac{1+(\vec a|\vec a)}{2}\,.
\eeq
This equality is instructive to understanding the relation between our entanglement criterion introduced in the next section and the CCNR criterion.

\section{Entanglement detection via SIC POVMs}\label{sec:EntSIC}
Consider a bipartite state $\rho$ acting on the Hilbert space ${\cH=\cH^{A}\otimes\cH^{B}}$ with dimension ${d=d_A\times d_B}$, and denote by $\{E_k^{A}\}_{k=1}^{d_A^2}$ and $\{E_k^{B}\}_{k=1}^{d_B^2}$ the normalized SIC POVMs for the two respective subsystems; see the schematic setup in Fig.~\ref{fig:SIC-sketch}(b). The linear correlations between $E^{A}$ and $E^{B}$ read
\begin{equation}\label{eq:CorrelationSIC}
[\mathcal P_{\rm s}]_{ij}=\langle E_i^{A}\otimes E_j^{B}\rangle,
\end{equation}
from which we can construct a simple, but useful entanglement criterion via SIC POVMs (ESIC).
\begin{proposition}[ESIC]\label{entSIC}
If a state $\rho$ is separable, then
\begin{equation}
\Trnorm{\mathcal P_{\rm s}}\le 1
\end{equation}
has to hold; otherwise, it is entangled.
\end{proposition}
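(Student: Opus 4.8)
The plan is to mirror the proof of the CCNR criterion (Proposition~\ref{ccnr}), exploiting the convexity of the trace norm together with the SIC-specific normalization identity in Eq.~\eqref{eq:pureSIC}. First I would establish the bound for a single product state $\rho=\rho_A\otimes\rho_B$. In that case the correlations defined in Eq.~\eqref{eq:CorrelationSIC} factorize, $[\mathcal{P}_{\rm s}]_{ij}=\langle E_i^A\rangle\langle E_j^B\rangle = e_i^A e_j^B$, so the correlation matrix is the rank-one outer product $\mathcal{P}_{\rm s}=|\vec{e}_A)(\vec{e}_B|$, exactly as for $C$ in Eq.~\eqref{eq:Cnorm}. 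It is worth noting that this factorization holds even though the renormalized SIC elements $\{E_k\}$ do \emph{not} form an orthonormal operator basis, since it relies only on the multiplicativity of expectation values on product states.

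The key step is to evaluate and bound the trace norm of this rank-one matrix. I would use $\Trnorm{|\vec{e}_A)(\vec{e}_B|} = (\vec{e}_A|\vec{e}_A)^{1/2}(\vec{e}_B|\vec{e}_B)^{1/2}$, and then invoke the normalization identity $(\vec{e}|\vec{e}) = \frac{1+\tr(\rho^2)}{2}$ from Eq.~\eqref{eq:pureSIC} for each subsystem separately. Since $\tr(\rho_A^2)\le 1$ and $\tr(\rho_B^2)\le 1$, each factor is at most one, which gives $\Trnorm{\mathcal{P}_{\rm s}}\le 1$ for every product state. This is precisely where the SIC structure enters: although the factor $(\vec{e}|\vec{e})$ differs from the CCNR value $\tr(\rho^2)$, the defining symmetry of the SIC POVM shifts it to $(1+\tr(\rho^2))/2$, which is still bounded by one and is saturated exactly for pure states.

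Finally I would extend the bound to arbitrary separable states by convexity. Writing $\rho = \sum_i p_i\,\rho_A^{(i)}\otimes\rho_B^{(i)}$ as in Eq.~\eqref{eq:convexsum}, the matrix $[\mathcal{P}_{\rm s}]_{ij}=\langle E_i^A\otimes E_j^B\rangle$ is linear in $\rho$, so $\mathcal{P}_{\rm s}=\sum_i p_i\,\mathcal{P}_{\rm s}^{(i)}$ decomposes into the corresponding product-state matrices $\mathcal{P}_{\rm s}^{(i)}$. The triangle inequality for the trace norm then yields $\Trnorm{\mathcal{P}_{\rm s}} \le \sum_i p_i \Trnorm{\mathcal{P}_{\rm s}^{(i)}} \le \sum_i p_i = 1$, which completes the argument.

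As for the main obstacle, the proof itself is a direct adaptation of the CCNR argument and involves no serious technical difficulty once Eq.~\eqref{eq:pureSIC} is in hand. The only point requiring genuine care is the purity/normalization identity $(\vec{e}|\vec{e}) = (1+\tr(\rho^2))/2$, which encodes the symmetric informationally complete property; this is what replaces the orthonormality relation used for CCNR and is the sole place where the SIC structure is actually needed. The more interesting question---why this criterion detects strictly more entangled states than CCNR---lies beyond the separability bound and is deferred to the examples in Sec.~\ref{sec:Ex}.
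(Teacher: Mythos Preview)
Your proposal is correct and follows essentially the same route as the paper: factorize $\mathcal{P}_{\rm s}$ for a product state into a rank-one outer product, bound its trace norm by $(\vec{e}_A|\vec{e}_A)^{1/2}(\vec{e}_B|\vec{e}_B)^{1/2}\le 1$ via Eq.~\eqref{eq:pureSIC}, and then extend to all separable states using convexity of the trace norm. The only addition beyond the paper is your explicit remark that the factorization does not rely on orthonormality of the $E_k$, which is a fair clarification.
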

\begin{proof}
For a product state $\rho=\rho_A\otimes\rho_B$, we have
\begin{equation}
\mathcal P_{\rm s}=
\left(\begin{array}{c}
  e_{A,1} \\
  e_{A,2} \\
  \vdots
\end{array}\right)
\left(\begin{array}{ccc}
  e_{B,1} & e_{B,2} & \cdots
\end{array}\right)\equiv  |\vec e_A)(\vec  e_B|\,,
\end{equation}
where $e_{A,j}=\tr(\rho E_j^A)$ for $j=1,2, \ldots, d_A^2$ and $e_{B,k}=\tr(\rho E_k^B)$ for $k=1,2, \ldots, d_B^2$; cf. Eq.~\eqref{eq:ek}.
Then,
\begin{eqnarray}\label{eq:Pnorm}
\Trnorm{\mathcal P_{\rm s}}&=&(\vec e_A|\vec e_A)^{1/2}(\vec e_B|\vec e_B)^{1/2}\nonumber\\
&=&\sqrt{\sum_j e_{A,j}^2}\sqrt{\sum_k e_{B,k}^2}
\leq 1
\end{eqnarray}
according  to Eq.~\eqref{eq:pureSIC}.
Again, by employing the convexity property of the trace norm under mixing, we have $\Trnorm{\mathcal P_{\rm s}}\leq1$ for separable states in general.
\end{proof}

\begin{proposition}
The ESIC criterion is independent of the specific SIC POVMs for systems A and B. In other words, any pair of SIC POVMs leads to the same criterion.
\end{proposition}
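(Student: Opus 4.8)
The plan is to fix, once and for all, an orthonormal Hermitian operator basis for each party and to show that passing from that basis to an arbitrary SIC POVM only multiplies the correlation matrix by orthogonal matrices, which leave the trace norm untouched. Concretely, I choose orthonormal bases $\{A_m\}_{m=0}^{d_A^2-1}$ and $\{B_n\}_{n=0}^{d_B^2-1}$ with $A_0=\one/\sqrt{d_A}$ and $B_0=\one/\sqrt{d_B}$, and expand the normalized SIC elements as $E_i^A=\sum_m[S_A]_{im}A_m$ and $E_j^B=\sum_n[S_B]_{jn}B_n$. Writing $\rho=\sum_{mn}C_{mn}A_m\otimes B_n$ with $C_{mn}=\mean{A_m\otimes B_n}$ the (SIC-independent) correlation matrix, linearity of Eq.~\eqref{eq:CorrelationSIC} gives $\mathcal P_{\rm s}=S_A\,C\,S_B^{T}$. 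It therefore suffices to prove that $S_A=\hat O_A D_A$ and $S_B=\hat O_B D_B$, where $\hat O_A,\hat O_B$ are orthogonal and $D_A,D_B$ are \emph{universal} diagonal matrices not depending on the chosen SIC.

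The heart of the argument is the evaluation of $S_A^{T}S_A$, whose entries are $[S_A^{T}S_A]_{mn}=\sum_i \tr(E_i^A A_m)\tr(E_i^A A_n)$ and are controlled by two universal features of any SIC POVM. First, completeness, $\sum_i E_i^A=\sqrt{d_A(d_A+1)/2}\,\one$, makes the identity direction decouple: for $m=0$, $n\ge1$ the sum is proportional to $\tr(A_n)=0$, while the $(0,0)$ entry is the fixed constant $\sum_i[\tr(E_i^A A_0)]^2=(d_A+1)/2$. Second, the defining overlaps in Eq.~\eqref{eq:SIC} fix the Gram matrix $\tr(E_i^A E_j^A)=\tfrac12\delta_{ij}+\tfrac1{2d_A}$ for \emph{every} SIC; subtracting the identity part shows that the traceless components of the SIC form a tight frame, so the lower block $[S_A^{T}S_A]_{mn}$ with $m,n\ge1$ equals $\tfrac12\delta_{mn}$. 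Hence $S_A^{T}S_A=D_A^{2}$ with $D_A=\mathrm{diag}\bigl(\sqrt{(d_A+1)/2},\,1/\sqrt2,\dots,1/\sqrt2\bigr)$, manifestly independent of which SIC was used.

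Because $S_A$ is square and $S_A^{T}S_A=D_A^{2}$ is a fixed positive diagonal matrix, $\hat O_A\equiv S_AD_A^{-1}$ satisfies $\hat O_A^{T}\hat O_A=\one$, i.e. it is orthogonal; thus $S_A=\hat O_AD_A$, and likewise $S_B=\hat O_BD_B$. Substituting, $\mathcal P_{\rm s}=\hat O_A\,(D_A C D_B)\,\hat O_B^{T}$, and since the trace norm is invariant under multiplication by orthogonal matrices on either side, $\Trnorm{\mathcal P_{\rm s}}=\Trnorm{D_A C D_B}$. The right-hand side involves only the fixed weights $D_A,D_B$ and the SIC-independent correlation matrix $C$, so the value of $\Trnorm{\mathcal P_{\rm s}}$, and hence the ESIC criterion, is the same for any choice of SIC POVMs on $A$ and $B$.

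The step I expect to require the most care is isolating the orthogonal factor on the \emph{outside}. The Gram relation $S_AS_A^{T}=G_A$ alone only yields a left polar form $S_A=G_A^{1/2}\hat O_A$, which would sandwich the SIC-dependent orthogonal matrix against $C$ and would \emph{not} preserve the trace norm. What rescues the argument is that completeness aligns the identity with the single distinguished eigendirection of $G_A$, so that in the basis containing $A_0=\one/\sqrt{d_A}$ the matrix $S_A^{T}S_A$ is not merely fixed but \emph{diagonal}; this is exactly what lets the universal factor $D_A$ sit on the inside and the orthogonal factor $\hat O_A$ on the outside. Establishing the tight-frame identity $\tr(E_i^AE_j^A)=\tfrac12\delta_{ij}+\tfrac1{2d_A}$ directly from Eq.~\eqref{eq:SIC}, together with this alignment, is the only nonroutine input.
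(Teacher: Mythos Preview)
Your argument is correct, but it follows a different route from the paper. The paper compares two SIC POVMs directly: writing $\tilde E_i^A=\sum_j O^A_{ij}E_j^A$, it asserts that the change-of-basis matrix $O^A$ is orthogonal, whence $\tilde{\mathcal P}_{\rm s}=O^A\mathcal P_{\rm s}(O^B)^T$ and $\Trnorm{\tilde{\mathcal P}_{\rm s}}=\Trnorm{\mathcal P_{\rm s}}$. You instead refer each SIC to a fixed orthonormal Hermitian basis containing $\one/\sqrt d$, show $S_A^{T}S_A=D_A^{2}$ is a universal diagonal matrix, and conclude $\mathcal P_{\rm s}=\hat O_A(D_ACD_B)\hat O_B^{T}$, so that $\Trnorm{\mathcal P_{\rm s}}=\Trnorm{D_ACD_B}$ is manifestly SIC-independent. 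Your approach is longer but more self-contained: the paper does not justify why $O^A$ is orthogonal (this is not automatic, since the $E_j^A$ are not Hilbert--Schmidt orthonormal; one needs both the fixed Gram matrix $\tr(E_iE_j)=\tfrac12\delta_{ij}+\tfrac1{2d}$ \emph{and} the equal-trace condition $\tr E_i=\tr\tilde E_i$, which together force $O^A\mathbf 1=\mathbf 1$ and then $O^A(O^A)^T=\one$). In fact your factorization $S_A=\hat O_AD_A$ immediately supplies this missing step, since $O^A=\tilde S_AS_A^{-1}=\hat{\tilde O}_A\hat O_A^{T}$. A further dividend of your route is the explicit identity $\Trnorm{\mathcal P_{\rm s}}=\Trnorm{D_ACD_B}$, which ties the ESIC quantity to a reweighted CCNR correlation matrix and is a natural starting point for analyzing the paper's Conjecture~\ref{conj}.
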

\begin{proof}
Let  $\{E_k^{A}\}_{k=1}^{d_A^2}$ and  $\{\tilde{E}_k^{A}\}_{k=1}^{d_A^2}$ be two arbitrary normalized SIC POVMs for system A, while $\{E_k^{B}\}_{k=1}^{d_B^2}$ and $\{\tilde{E}_k^{B}\}_{k=1}^{d_B^2}$ are two arbitrary  normalized SIC POVMs for system B. Let ${\mathcal P}_{\rm s}$ be the correlation matrix as defined in Eq.~\eqref{eq:CorrelationSIC} and
\begin{equation}
[\tilde{\mathcal P}_{\rm s}]_{ij}=\langle \tilde{E}_i^{A}\otimes \tilde{E}_j^{B}\rangle\,.
\end{equation}
To prove the proposition it suffices to prove that
${\Trnorm{\tilde{\mathcal P}_{\rm s}}=\Trnorm{\mathcal P_{\rm s}}}$. To this end, note that $\tilde{E}_i^{A}$ can be expressed as a linear combination of $E_j^{A}$, that is, ${\tilde{E}_i^{A}=\sum_j O^A_{ij} E_j^{A}}$, where $O^A_{ij}$ are uniquely determined by  $\{E_k^{A}\}_{k=1}^{d_A^2}$ and  $\{\tilde{E}_k^{A}\}_{k=1}^{d_A^2}$, and form an orthogonal matrix. Similarly, $\tilde{E}_i^{B}=\sum_j O^B_{ij} E_j^{B}$ with $ O^B_{ij}$ forming an orthogonal matrix. Therefore,
\begin{equation}
[\tilde{\mathcal P}_{\rm s}]_{ij}= \sum_{k,l} O^A_{ik} O^B_{jl}\langle E_k^{A}\otimes E_l^{B}\rangle=\sum_{k,l} O^A_{ik} O^B_{jl}[\mathcal P_{\rm s}]_{kl}\,,
\end{equation}
that is, ${\tilde{\mathcal P}_{\rm s}=O^A\mathcal P_{\rm s}(O^B)^T}$. Since both $O^A$ and $O^B$ are orthogonal matrices, it follows that ${\Trnorm{\tilde{\mathcal P}_{\rm s}}=\Trnorm{\mathcal P_{\rm s}}}$, so the entanglement  criterion in Proposition \ref{entSIC} does not depend on the specific SIC POVMs for systems A and B.
\end{proof}

It is not easy to analytically compare the ESIC criterion with CCNR.
Nevertheless, through extensive numerical evidence, we find that the ESIC criterion is  stronger than the CCNR criterion. Here we present several  observations and a conjecture.
For a product state, we have
	\beq
	\Trnorm{C}\leq \Trnorm{\mathcal P_{\rm s}}\,.
	\eeq
This inequality follows from Eqs.~\eqref{eq:Cnorm}, \eqref{eq:ea}, and \eqref{eq:Pnorm}. Numerical calculations suggest that this inequality also holds for separable states.

\begin{conjecture}\label{conj}
If ${\Trnorm{C}>1}$, then ${\Trnorm{\mathcal P_{\rm s}}>1}$.
\end{conjecture}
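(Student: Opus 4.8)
The plan is to first reduce the conjecture to a single $\rho$-independent statement about the trace norm, and then to target a sharp strengthening that implies it. Fix Hermitian operator bases $\{G^A_k\}$ and $\{G^B_l\}$ with $G^A_1=\one_A/\sqrt{d_A}$, $G^B_1=\one_B/\sqrt{d_B}$ and all remaining elements traceless, and expand $E^A_i=\sum_k[T^A]_{ik}G^A_k$ with $[T^A]_{ik}=\tr(E^A_iG^A_k)$ (and analogously for $B$). Since $C$ and $\mathcal P_{\rm s}$ are both linear in $\rho$ through the \emph{same} expansion, one obtains the fixed relation $\mathcal P_{\rm s}=T^AC(T^B)^{T}$, valid for every $\rho$. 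The defining $2$-design property of a SIC POVM yields the frame identity $\sum_iE^A_i\,\tr(E^A_iX)=\tfrac12[X+\tr(X)\one_A]$, so that in the chosen basis
\begin{equation}
(T^A)^{T}T^A=\tfrac12 I+\tfrac{d_A}{2}\,e_1e_1^{T}=\mathrm{diag}\!\Big(\tfrac{d_A+1}{2},\tfrac12,\dots,\tfrac12\Big)=:D_A .
\end{equation}
Hence $T^A=V^AD_A^{1/2}$ with $V^A$ orthogonal, and likewise $T^B=V^BD_B^{1/2}$. Because orthogonal factors leave the trace norm invariant, this gives the clean reduction $\Trnorm{\mathcal P_{\rm s}}=\Trnorm{D_A^{1/2}C\,D_B^{1/2}}$.

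The naive pointwise bound $\Trnorm{C}\le\Trnorm{\mathcal P_{\rm s}}$ is \emph{false}: for a maximally entangled state in $d\times d$ one finds $\Trnorm{C}=d$ but $\Trnorm{\mathcal P_{\rm s}}=(d+1)/2$, so any proof must exploit an additive constant rather than multiplicative domination. I would therefore aim at the sharp inequality
\begin{equation}\label{eq:target}
\Trnorm{D_A^{1/2}C\,D_B^{1/2}}\ \ge\ \frac{1+\Trnorm{C}}{2},
\end{equation}
which implies the conjecture, since its right-hand side exceeds $1$ exactly when $\Trnorm{C}>1$. This target mirrors the single-system identity $(\vec e|\vec e)=\tfrac12[1+(\vec a|\vec a)]$ and the product-state estimate \eqref{eq:Pnorm}; moreover it is \emph{tight}, being saturated e.g.\ by isotropic states, so it is the strongest statement one can hope for here.

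To prove \eqref{eq:target} I would use the rank-one structure $D_A^{1/2}=\tfrac1{\sqrt2}(I+\beta_Ae_1e_1^{T})$ with $\beta_A=\sqrt{d_A+1}-1$, giving
\begin{equation}
2D_A^{1/2}CD_B^{1/2}=C+\beta_A\,e_1e_1^{T}C+\beta_B\,Cf_1f_1^{T}+\beta_A\beta_B\gamma\,e_1f_1^{T},
\end{equation}
where $\gamma=[C]_{11}=1/\sqrt{d_Ad_B}$ is fixed by $\tr\rho=1$ and the boosted corner is $[2D_A^{1/2}CD_B^{1/2}]_{11}=\sqrt{(d_A+1)(d_B+1)}\,\gamma>1$. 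The correction $R:=2D_A^{1/2}CD_B^{1/2}-C$ is supported only on the first row and column, so $2D_A^{1/2}CD_B^{1/2}=C+R$ is a bordered perturbation of $C$. The natural route is to lower bound the trace norm by duality, $\Trnorm{\,\cdot\,}\ge\tr(Z^{T}\,\cdot\,)$ with $\|Z\|_{\rm op}\le1$, choosing $Z$ as a convex combination of the sign matrix $X_\ast$ of $C$ (which reproduces $\Trnorm{C}$) and the single-entry matrix $e_1f_1^{T}$ (which isolates the boosted corner and supplies the ``$+1$'').

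The main obstacle is the pair of cross terms $\beta_A e_1e_1^{T}C$ and $\beta_B Cf_1f_1^{T}$, i.e.\ the first row and column of $C$: they couple the boosted identity sector to the traceless sector that $D^{1/2}$ \emph{shrinks} by the factor $1/2$, and their contribution to $\tr(Z^{T}\,\cdot\,)$ carries an uncontrolled sign. A crude estimate using $\|D_A^{-1/2}\|_{\rm op}=\sqrt2$ only yields $\Trnorm{\mathcal P_{\rm s}}\ge\tfrac12\Trnorm{C}$, which is useless at the threshold $\Trnorm{C}=1$. Since \eqref{eq:target} is saturated on isotropic states, there is no slack to absorb these cross terms, so the decisive step will be to show that the fixed corner value $\gamma$ together with the singular-value interlacing of the bordered matrix $C+R$ forces the first row/column and the traceless block to combine with exactly the right balance---plausibly via a Thompson-type bound for bordered perturbations or a Schur-complement argument on $C+R$. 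Establishing this balance is where I expect the real work to lie.
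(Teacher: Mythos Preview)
The paper does \emph{not} prove this statement: it is explicitly labeled a conjecture, and the authors support it only with numerical evidence (random two-qubit states, Horodecki and chessboard bound entangled states, higher-dimensional sampling). There is therefore no ``paper's own proof'' to compare against; you are attempting to close a gap the authors left open.

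Your reduction is correct and clean. The relation $\mathcal P_{\rm s}=T^ACT^{B\,T}$, the Gram computation $(T^A)^{T}T^A=\mathrm{diag}\bigl(\tfrac{d_A+1}{2},\tfrac12,\dots,\tfrac12\bigr)$ via the $2$-design/frame identity, and the consequence $\Trnorm{\mathcal P_{\rm s}}=\Trnorm{D_A^{1/2}CD_B^{1/2}}$ are all valid and reduce the conjecture to a $\rho$-independent operator inequality. Your sharp target $\Trnorm{\mathcal P_{\rm s}}\ge\tfrac12(1+\Trnorm{C})$ is exactly the inequality version of the paper's own numerical observation Eq.~\eqref{eq:CP}, and your tightness checks (maximally mixed, pure product, maximally entangled, isotropic) are consistent with the equality cases the paper reports. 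So the target is well chosen.

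However, the proposal is not a proof but a plan, and you say so yourself: the decisive step---controlling the cross terms $\beta_A e_1e_1^{T}C$ and $\beta_B Cf_1f_1^{T}$ so that the bordered perturbation $C+R$ obeys the claimed lower bound with zero slack---is left to ``Thompson-type'' or ``Schur-complement'' arguments that are not carried out. Because the inequality is saturated on a large family, the duality certificate $Z$ cannot be chosen generically; it must depend on $C$ in a way that exactly cancels the sign-indefinite row/column contributions. Until that step is executed, the conjecture remains open, just as in the paper.
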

\noindent\emph{Remark}.
The converse of the above relation does not hold in general.
However, for the special case of bipartite pure states, our numerical calculations suggest that
\begin{equation}\label{eq:CP}
\Trnorm{C}-1=2(\Trnorm{\mathcal P_{\rm s}}-1)\,.
\end{equation}
The same is true if the bipartite state is a convex combination of a pure state and the completely mixed state.
So the ESIC criterion and the CCNR criterion become equivalent in this special case.
Incidentally, in this case, they are also equivalent to the nonlinear criteria presented in Refs.~\cite{pra78.052319, pra77.060301}.
Note that Eq.~\eqref{eq:CP} can be verified analytically in the case of two-qubit pure states.

For a bipartite pure state $\rho=\ket{\psi}\bra{\psi}$, the Schmidt decomposition in Eq.~\eqref{eq:schmidtRho} is closely related to the Schmidt decomposition of $\ket{\psi}$. Suppose $\ket{\psi}$ has the Schmidt decomposition
\begin{equation}
\ket{\psi}=\sum_{k=1}^{d_A}\tilde{\lambda}_k\ket{u_k^A}\ket{v_k^B}\,,
\end{equation}
where $\tilde{\lambda}_k$ are the Schmidt coefficients and satisfy $\sum_k\tilde{\lambda}_k^2=1$. Then the Schmidt coefficicents in Eq.~\eqref{eq:schmidtRho} are given by $\tilde{\lambda}_i\tilde{\lambda}_j$ for $i,j=1,2,\ldots, d_A$. Therefore,
\begin{equation}
\Trnorm{C}=\sum_i\sum_j\tilde{\lambda}_i\tilde{\lambda}_j= \Biggl(\sum_j\tilde{\lambda}_j\Biggr)^2  \,,
\end{equation}
In Fig.~\ref{fig:pure_qutrit}, we plot the value of $\Trnorm{C}-1$ [that is, $2(\Trnorm{\mathcal P_{\rm s}}-1)$ according to Eq.~\eqref{eq:CP}] against the two independent squared Schmidt coefficients for two-qutrit pure states.
All states are entangled, except for the ones corresponding to the  minimum value (the three fulcrums) of the surface plot, which are  product states.
Top of the surface (or center of the contours) represents the maximally entangled states. Incidentally, for two-qubit pure states, the corresponding plot  is an arc lying on the plane formed by one of the horizontal axes and the vertical axis.

In the following section, we use various examples to demonstrate that Conjecture~\ref{conj} holds true.

\begin{figure}[t]
\includegraphics[width=.9\columnwidth]{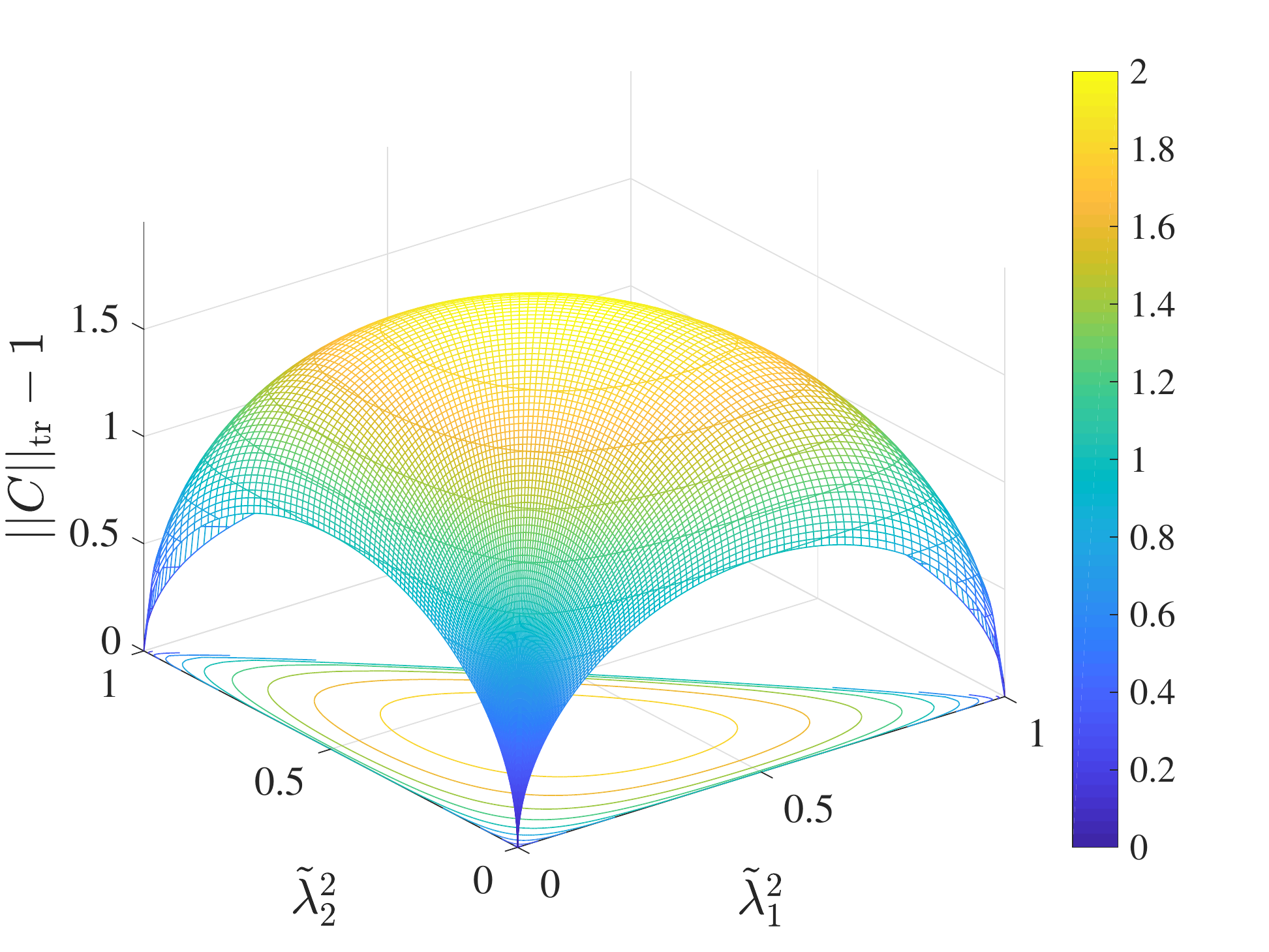}
\caption{\label{fig:pure_qutrit}
The value of $\Trnorm{C}-1$ [or $2(\Trnorm{\mathcal P_{\rm s}}-1)$ according to {Eq.~\eqref{eq:CP}}] with respect to the two independent squared Schmidt coefficients for two-qutrit pure states.
The  minimum (the three fulcrums) corresponds to product states, while the maximum (top of the surface plot, or center of the contours) corresponds to the maximally entangled states.
}
\end{figure}
%

\section{Examples}\label{sec:Ex}
For the first example, let us consider  two-qubit states.
In this case, the PPT criterion \cite{ppt, ppt2} is necessary and sufficient for detecting entanglement.
For a randomly generated two-qubit state $\rho_{\mathrm{2qb}}$, if it is entangled, then we add white noise to it and get the state
\begin{equation}
\rho(q)=q\rho_{\mathrm{2qb}}+(1-q)\frac{\one}{4},\quad 0\leq q\leq 1.
\end{equation}
The threshold value of $q$ at which $\rho(q)$ becomes separable can be determined by the PPT criterion $\Trnorm{\rho^{T_B}}$. Then we compare it with the values determined by the ESIC $\Trnorm{\mathcal P_{\rm s}}$ and CCNR $\Trnorm{C}$ criteria, respectively. The smaller the difference is as compared to the PPT criterion, the better the criterion is. Figure~\ref{fig:2qb} illustrates the results on $10\,000$ entangled two-qubit states which are generated randomly according to the Hilbert-Schmidt measure
\cite{note1}.
As can be seen, the ESIC citerion is better than CCNR in most of the cases.
Notably, all states that are detected by CCNR can also be detected by ESIC.

Moreover, the advantage of ESIC over CCNR is not only tied to the Hilbert-Schmidt measure. To corroborate this point, we have considered random mixed states generated according to various different measures. In particular, we have studied induced measures on mixed states obtained by taking partial trace of the Haar random pure states of  bipartite systems \cite{pra71.032313}. For example, if $|\Psi\rangle$ is a Haar random pure state in dimension $N\times K$, then taking partial trace over the second system yields a random mixed state $\rho=\tr_K(\ket{\Psi}\bra{\Psi})$ on the Hilbert space of dimension $N$. The resulting induced measure will be denoted by $(N,K)$. Note that this measure is equivalent to the Hilbert-Schmidt measure  when $N=K$.
In general, as $K$ increases, 
the  measure gets more biased towards more mixed states. Here we are interested in the two-qubit random mixed states, so $N=2\times 2=4$. Table~\ref{tab:2qb} shows the distinction between ESIC and CCNR for three different choices of $K$, namely $K=3,4,6$. In all three cases, ESIC can detect more entangled states than CCNR. In addition, we have considered random states generated according to the Jeffreys prior  over the probability space \cite{arXiv:1612.05180}.
Again, the ESIC criterion is better than CCNR.

We mentioned early that the CCNR criterion is able to detect certain bound entangled states, where the PPT criterion  fails completely.
In the second example, we consider the family of $3\times3$ bound entangled states introduced by P. Horodecki \cite{pla232.333},
\begin{equation}\label{eq:rhoPH}
\rho_{\mathrm{PH}}^{x}=\frac1{8x+1}
\left(
\begin{array}{ccccccccc}
 x & 0 & 0 & 0 & x & 0 & 0 & 0 & x \\
 0 & x & 0 & 0 & 0 & 0 & 0 & 0 & 0 \\
 0 & 0 & x & 0 & 0 & 0 & 0 & 0 & 0 \\
 0 & 0 & 0 & x & 0 & 0 & 0 & 0 & 0 \\
 x & 0 & 0 & 0 & x & 0 & 0 & 0 & x \\
 0 & 0 & 0 & 0 & 0 & x & 0 & 0 & 0 \\
 0 & 0 & 0 & 0 & 0 & 0 & \frac{1+x}{2} & 0 & \frac{\sqrt{1-x^2}}{2} \\
 0 & 0 & 0 & 0 & 0 & 0 & 0 & x & 0 \\
 x & 0 & 0 & 0 & x & 0 & \frac{\sqrt{1-x^2}}{2} & 0 & \frac{1+x}{2}
\end{array}
\right)\!.
\end{equation}
Although these states cannot be detected by the PPT criterion and are not distillable, they are nevertheless entangled for  ${0<x<1}$. Consider the mixtures of $\rho_{\mathrm{PH}}^{x}$ with the white noise,
\begin{equation}
\rho(x,q)=q\rho_{\mathrm{PH}}^{x}+(1-q)\frac{\one}{9},\quad 0\leq q\leq 1.
\end{equation}
Figure~\ref{fig:PH} illustrates the parameter range for which the state $\rho(x,q)$ is entangled and can be detected by the ESIC (CCNR) criterion. It is clear from the figure that the ESIC criterion can detect strictly more entangled states than the CCNR criterion.

\begin{figure}[t]
\includegraphics[width=.9\columnwidth]{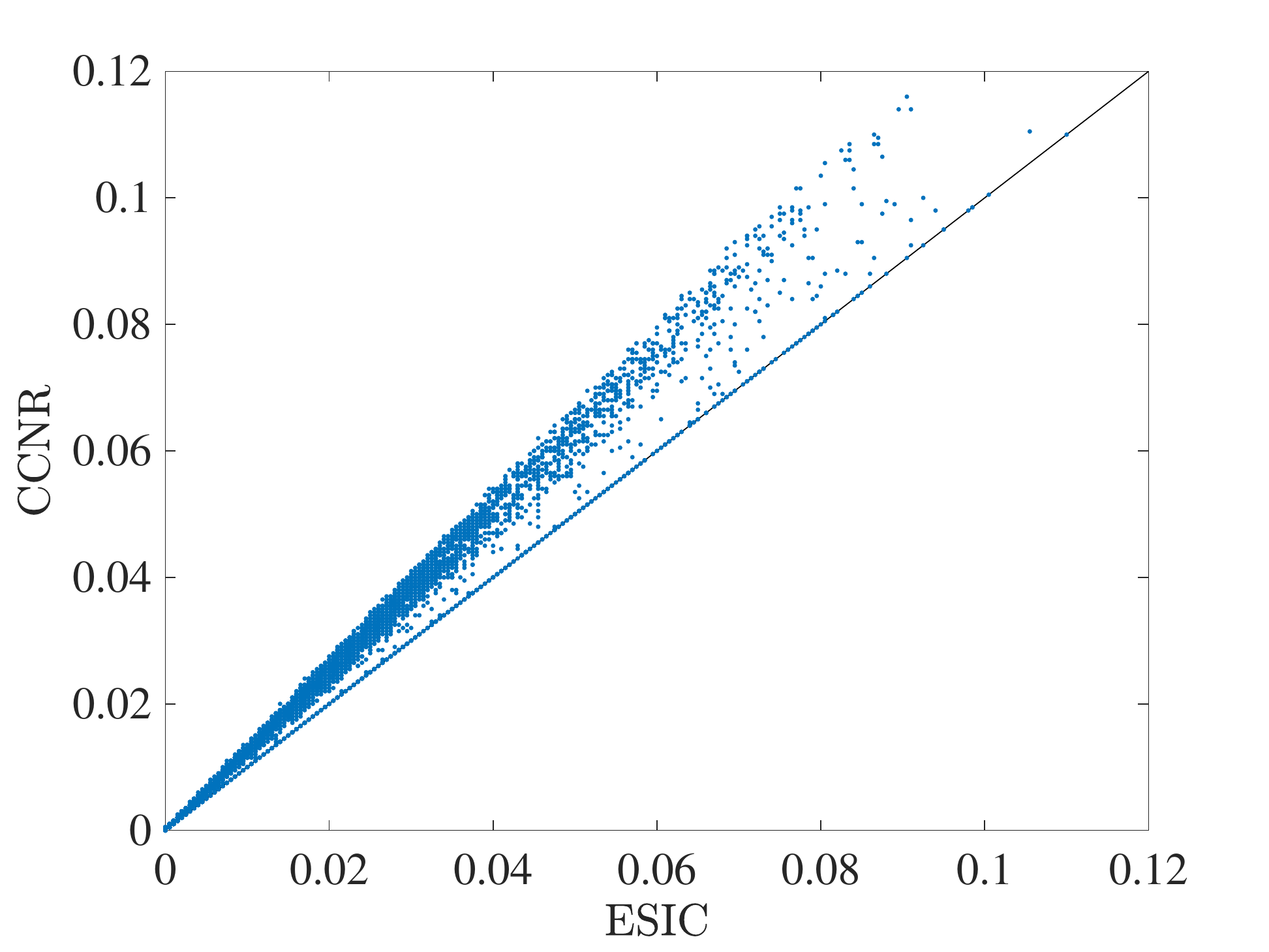}
\caption{\label{fig:2qb}
For a randomly generated two-qubit entangled state, we add white noise to it until it cannot be detected by a particular criterion.
The horizontal (vertical) axis represents the difference in the threshold value between the ESIC (CCNR) criterion and the PPT criterion.
The smaller the difference is, the better the criterion is. 
The plot illustrates the results on $10\,000$ entangled two-qubit states which are generated randomly according to the Hilbert-Schmidt measure \cite{note1},  which demonstrates that the ESIC criterion  is much better than the CCNR  criterion.
}
\end{figure}
\begin{table}[h]
\caption{\label{tab:2qb}
Entanglement detection of  two-qubit states generated  randomly according to various measures. Altogether $50\,000$ states are generated in each case, and the numbers show the fractions of states that are detected.
Here the symbol $(N,K)$ denotes the family of measures induced by the Haar measure on the unitary group $U(NK)$, and in the two-qubit case $N=4$; see Ref.~\cite{pra71.032313} for details.
The symbol ``Jeff'' represents the online readily-made random samples in Ref.~\cite{arXiv:1612.05180} that are generated according to the Jeffreys prior over the probability space.
As we can see, the ESIC criterion is better than CCNR no matter what measure we choose to generate the random samples.}
\begin{tabular}{@{\hspace*{0.2cm}}c@{\hspace*{0.2cm}}|@{\hspace*{0.2cm}}c@{\hspace*{0.2cm}}|@{\hspace*{0.2cm}}c@{\hspace*{0.2cm}}|@{\hspace*{0.2cm}}c@{\hspace*{0.2cm}}|@{\hspace*{0.2cm}}c@{\hspace*{0.2cm}}}
\hline\hline
   & (4, 3) & (4, 4) & (4, 6) & Jeff \\
\hline
PPT & 92.93\% & 75.88\% & 41.35\% & 79.27\% \\
CCNR & 84.60\%  & 65.77\%  & 33.91\% & 69.18\% \\
ESIC & 85.97\% & 67.27\%  & 35.09\% & 70.75\% \\
\hline
\end{tabular}
\end{table}
\begin{figure}[t]
\includegraphics[width=.9\columnwidth]{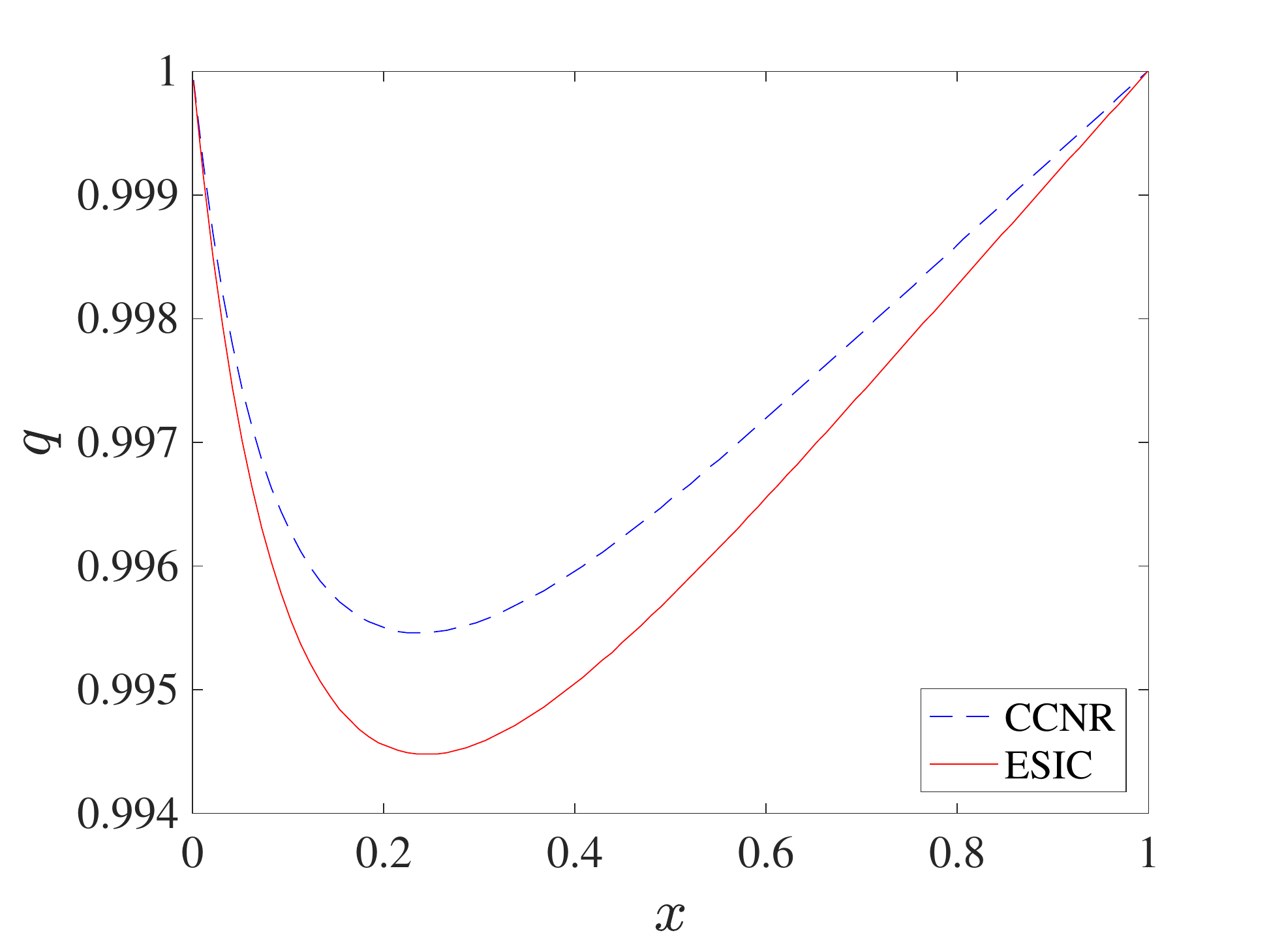}
\caption{\label{fig:PH}
Entanglement detection of $\rho(x,q)$, mixtures of  Horodecki $3\times3$ bound entangled states  with the white noise. States above the red solid (blue dashed) curve
 are detected by the ESIC (CCNR) criterion. The ESIC criterion can detect all states detected by the CCNR criterion and some additional states.
}
\end{figure}

For the next example, we consider the $3\times3$ bound entangled states called \emph{chessboard} states \cite{chessboard}. They are defined as
\begin{equation}
\rho_{\mathrm{cb}}={\cal N}\sum_{j=1}^4\ket{V_j}\bra{V_j}\,,
\end{equation}
with ${\cal N}$ being the normalization constant, where the unnormalized vectors are
\begin{eqnarray}
  \ket{V_1}&=&\ket{m_5,0,m_1m_3/m_6;0,m_6,0;0,0,0}\,,\nonumber\\
  \ket{V_2}&=&\ket{0,m_1,0;m_2,0,m_3;0,0,0}\,,\nonumber\\
  \ket{V_3}&=&\ket{m_6,0,0;0,-m_5,0;m_1m_4/m_5,0,0}\,,\nonumber\\
  \ket{V_4}&=&\ket{0,m_2,0;-m_1,0,0;0,m_4,0}\,.
\end{eqnarray}
These chessboard states are characterized by the six real parameters $m_k$s for $k=1,2,\ldots,6$.
We compare the ESIC criterion with the CCNR criterion on randomly generated chessboard states, where the six parameters are drawn independently from the normal distribution with zero mean value and standard deviation of 2.
Altogether $50\,000$ random states are generated, and the results are shown in Fig.~\ref{fig:chessboard}.
As we can see, the PPT criterion fails completely in this case.
The CCNR criterion detects $18.36\%$  states, while the ESIC criterion is able to detect  $20.37\%$ states, which is roughly $2\%$ more.
Again, all states that are detected by CCNR can also be detected by ESIC.

\begin{figure}[t]
\includegraphics[width=.9\columnwidth]{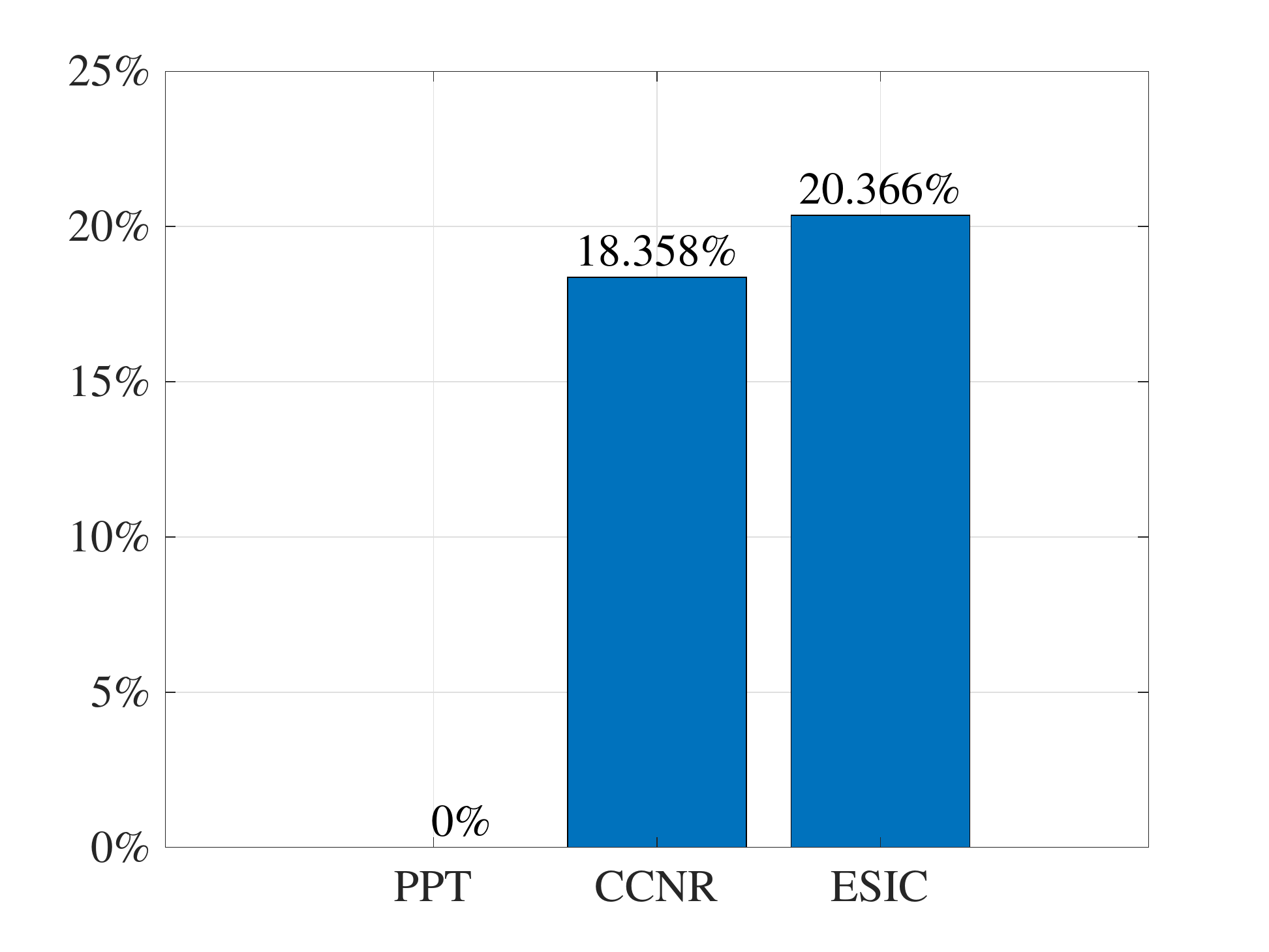}
\caption{\label{fig:chessboard}
Entanglement detection of $3\times3$ chessboard states. Altogether $50\,000$  states are generated randomly, and the plot shows the fractions of states that are detected by  three criteria, respectively. Here the PPT criterion fails completely, the ESIC criterion can detect roughly $2\%$ more states than the  CCNR criterion.
}
\end{figure}
\begin{figure}[t]
\includegraphics[width=.9\columnwidth]{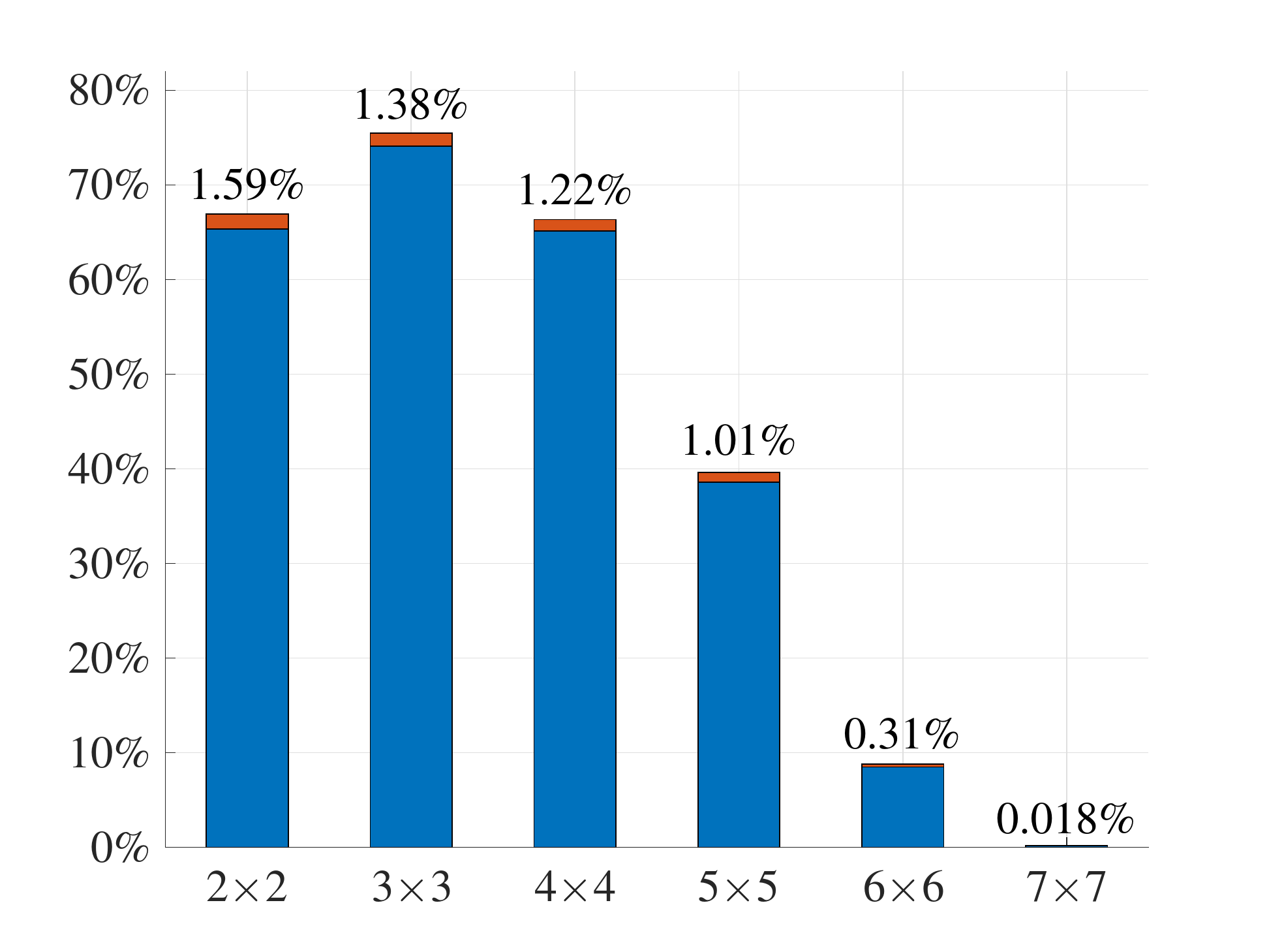}
\caption{\label{fig:higher}
Entanglement detection of $d_A\times d_B$ states with the ESIC and CCNR criteria. The states are
generated randomly according to the Hilbert-Schmidt measure. 
In each case, the ESIC criterion is able to detect more states,  which are represented by the red portion in each bar (value on top).
}
\end{figure}

Last but not least, we show that the ESIC criterion is better than CCNR also  for  higher dimensions. 
However, this conclusion does not mean that the ESIC criterion will get stronger in higher dimensions, as it has been shown already that each single criterion based on positive maps detects smaller fractions of states if the dimension increases; see Refs.~\cite{jmp51.042202, jpa48.505302}.
Altogether $50\,000$ random states are generated according to the Hilbert-Schmidt measure in each dimension ranging from $2\times2$ up to $7\times7$, respectively, see the results in Fig.~\ref{fig:higher}.
In all these cases, the ESIC criterion is shown to be better than CCNR, although the differences vary with the dimension.
It is quite surprising that an unusually large percentage of entangled states is  detected in the $3\times3$ case compared to other dimensions. Dimension 3 has been known to be very special in the study of SIC~POVMs \cite{Zauner}. The reason behind such anomaly deserves further study.

\section{Conclusion}\label{sec:Con}
A  SIC POVM represents a special single measurement setting, which is  tomographically complete.
In this paper we show that by using SIC POVMs we can construct a stronger and more efficient entanglement detection criterion than the CCNR criterion based on local orthogonal observables.  The superiority of our criterion  is illustrated  with various examples.
The reason behind this  superiority is worthy of further study.
In passing, we note that an equivalent criterion can be constructed by replacing  SIC POVMs with the complete sets of mutually unbiased bases.

For the CCNR criterion, the local uncertainty relation provides a straightforward construction of nonlinear entanglement witnesses. Then, an interesting open question is how to achieve a nonlinear improvement for the ESIC criterion with SIC POVMs.
Note that the nonlinear criterion with SIC POVMs introduced recently in Ref.~\cite{qip17.111} is  identical to the criterion using observables in Refs.~\cite{pra78.052319, pra77.060301}.

\acknowledgments
We thank Cheng-Jie Zhang for stimulating discussions.
This work has been supported by the European Research Council (Consolidator Grant No. 683107/TempoQ) and the Deutsche Forschungsgemeinschaft.
J.S. is also supported by the Beijing Institute of Technology Research Fund Program for Young Scholars.
A.A. acknowledges support by Erwin Schr\"{o}dinger Stipendium No. J3653-N27.
H.Z. acknowledges financial support from the Excellence Initiative of the German Federal and State Governments (ZUK 81) and the Deutsche Forschungsgemeinschaft in the early stage of this work.


\end{document}